\documentclass[11pt]{article}
\usepackage{amsbsy}
\usepackage{amsfonts}
\usepackage{amsmath}
\usepackage{amsthm}
\usepackage{amssymb}
\usepackage{apacite}
\usepackage{setspace}
\usepackage{graphicx}
\usepackage{bm,multicol}
\usepackage[english]{babel}
\usepackage{natbib}
\usepackage[T1]{fontenc}
\usepackage[utf8]{inputenc}
\usepackage{authblk}
\usepackage{xcolor}
\usepackage{natbib}
\usepackage[title]{appendix}

\newtheorem{lemma}{Lemma}

\newtheorem{theorem}{Theorem}

\newtheorem{definition}{Definition}

\newtheorem{remark}{Remark}

\doublespace
\newcommand{\newc}{\newcommand}
\newc{\N}{\mbox{N}}
\newc{\1}{\bf{1}}
\onehalfspacing

\def\signed #1{{\leavevmode\unskip\nobreak\hfil\penalty50\hskip2em
  \hbox{}\nobreak\hfil(#1)%
  \parfillskip=0pt \finalhyphendemerits=0 \endgraf}}

\newsavebox\mybox
\newenvironment{aquote}[1]
  {\savebox\mybox{#1}\begin{quote}}
  {\signed{\usebox\mybox}\end{quote}}

\begin{document}
\title{Converting P-Values in Adaptive Robust Lower Bounds of Posterior Probabilities to
increase the reproducible  Scientific ``Findings''}
\author{L. R. Pericchi \& M.E. Perez}
\affil{University of Puerto Rico, R\'{\i}o Piedras Campus}
\date{November 16, 2017}

\maketitle
\begin{abstract}
We put forward a novel calibration of $p$ values, the {\it ``Adaptive Robust Lower Bound'' (ARLB)} which maps $p$ values into approximations of posterior probabilities taking into account the effect of sample sizes. We build on the Robust Lower Bound proposed by \citet{SBB2001}, but we incorporate a simple power of the sample size to make it adaptive to different amounts of data.\\
We present several illustrations from where it is apparent that the ARLB closely approximates exact Bayes Factors. In particular, it has the same asymptotics as posterior probabilities but avoiding the problems of ``Bayesian Information Criterion" (BIC) for small samples relative to the number of parameters.\\
We prove that the ARLB is consistent as the sample size grows, and that it is information consistent (Berger and Pericchi, 2001)
for the canonical Normal case, but with methods that are keen to be generalized. So ARLB also avoids the problems of certain conjugate priors as g-priors.\\
In summary, this is a novel criterion easy to apply, as it only requires a real $p$ value, a sample size and parameter dimensionality. This method is intended to aid the practitioners, who are increasingly aware of the lack of reproducibility of traditional hypothesis testing ``findings" but at the same time, lack of concrete simple alternatives. Here is one.
\end{abstract}

\section{The Crisis of $p\mbox{ value}<0.05$}
\begin{aquote}{Bradley Efron, 2010}
\textit{Fishers's scale of evidence, particularly the $\alpha=0.05$ threshold, has been used in literally millions of serious scientific studies, and takes a good claim to being the 20th century's most influential piece of applied mathematics.}
\end{aquote}

Bayesian literature have been criticizing for several decades the implementation of hypothesis testing with fixed significance levels, and in particular the use of the scale $p \mbox{ value}<0.05$. That discussion was mostly regarded as a philosophical issue about the wrong interpretation of $p$-values as probabilities of the null hypothesis. However, the crisis of Fisher's scale of evidence exploded when scientific researchers, largely outside departments of Mathematics and Statistics, began reporting very low rates of reproducible scientific presumed findings. Many papers and opinions have been written on this subject, and we will mention just a few of them:
\begin{itemize}
\item In 2005, John Ioannidis publish a paper in PLOS Medicine entitled ``Why Most Published Research Findings Are False'' (Ioannidis, 2005)
\item Sir David Cox: ``Statistics is also about science and $p<0.05$ is seen as the passport for publication, and most if not all statisticians would take a rather disapproving view of it to put it mildly'' \citep{Cox2015}.
\item In 2015, the Basic and Applied Social Psychology Journal banned the use of significance testing, so $p<0.05$ will not anymore be a ``passport for publication''. \citep{Trafimow2015}
\item In March 2016, the American Statistical Association publishes a Statement on Statistical Significance and P-Values, establishing some general principles for the use and interpretation of p-values, principles on which all Statisticians could agree. (Wasserstein and Lazar, 2016)
\item In September 2017, a group of more than 70 researchers in Statistics published a paper asking for "redefining statistical significance". They propose to change the default $p$-value threshold for statistical significance from 0.05 to 0.005 for claims of new discoveries \citep{Benjamin2017}.
\end{itemize}

It is then clear that obtaining a $p$-value lower than $0.05$ does not open the doors for publication as before, and we Statisticians must provide alternatives to Scientists.

%
%
%
%

\subsection{How the old passport for publication should be replaced?}

The natural alternative to  Null Hypothesis Significance (NHST) methods, then,  would be using exact posterior probabilities for the hypothesis as the decision tool. Unfortunately, these probabilities are rarely available to scientists. Previous efforts in this direction include Schwarz's {\em Bayesian Information Criterion} (BIC) \citep{schwarz1978} as an approximation to a Bayes factor for large sample sizes. The BIC can be easily calculated in many problems and it avoids the formal specification of prior distributions; nevertheless, it produces results far from the actual Bayes Factor even for very large sample sizes \citep{KR1995}
 
 In contrast, tools for calculating p-values are widely available. This fact suggests finding bridges between  p-values and posterior probabilities of hypothesis, thus improving the decision making process.


One such bridge is presented in \citet{PP2014}.  A classical problem in the theory of statistics has been how p-values be corrected from their dependence on the sample size. P\'erez and  Pericchi find an answer by making an alternative question: How can  p-values be calibrated so that to have  the same asymptotic behavior as posterior probabilities? . The resulting proposal is  the {\em Adaptive $\alpha$ Level},

 \begin{equation}
\alpha_{n^*}(q)=\frac{[\chi^2_{\alpha}(q)+q\log(n^*)]^{\frac{q}{2}-1}}{2^{\frac{q}{2}-1}
n^{* \frac{q}{2}} \Gamma(\frac{q}{2})} \times C_{\alpha}
\label{eq:alphan}
\end{equation}

\noindent where $\alpha$ is the ``nominal"  type I error level, $q$ is the difference in dimension between the entertained models and $n^*$ is the \textit{Effective Sample Size} \citep{BBP2014}. This formula has (approximate) antecedents in \citet{CH1974} and \citet{Good1992}, and gives guidance on how to decrease the scale of p-values with the sample size. For one dimensional problems, it boils down to an elegant $\sqrt{ n^* \log(n^*)}$ correction formula. It can be used for finding the adaptive $\alpha_{n^*}$ levels and also the corresponding adaptive quantiles, suitable for testing via adaptive intervals for any q. (Incidentally note that the following formula gives a numerically more precise evaluation of the approximation (1) for $q=n_0=1$
\[
\alpha_e (n)=1-F_{\chi^2_1}(\chi_{\alpha}^2(1)+log(n))
\]
\noindent where $F_{\chi^2_1}$ is the distribution function of the Chi-square distribution with one degree of freedom)\\

Still the constant $C_{\alpha}$ has to be determined. In \citet{PP2014} propose using a \textbf{\textit{reference experiment}}, which is the result of a experimental design where the experimenter has specified a Type I error $\alpha$ and a Type II error $\beta$ for a specific point of statistical importance.

Let $n_0$ be the size of that reference experiment. Then

%
%
%
%
%
%

\[
\alpha(n)=\frac{\alpha*\sqrt{n_0\times (\log(n_0)+\chi^2_{\alpha}(1))} }{\sqrt{n^*\times (\log(n^*)+\chi^2_{\alpha}(1))}}.
\]

Table \ref{tab:adapalpha} shows how the adaptive alpha level changes with $n$ assuming $n_0=10$ and $\alpha=10$. Note that for $n=500$, $\alpha(n^*)$ has already drop to $0.005$.

\begin{table}[tbhp]
\begin{center}
\begin{tabular}{rl}

  \hline
  Sample Size &  $\alpha(n^*)$  \\
  \hline
  10 & 0.0500 \\
  50 & 0.0199 \\
  100 & 0.0135 \\
  500 & 0.0055 \\
  1000 & 0.0038 \\
  10000&0.0011\\
  \hline
\end{tabular}
  \end{center}
  \caption{Values of the adaptive alpha level $\alpha(n^*)$ for $n_0=10$ and $\alpha=0.05$.}
  \label{tab:adapalpha}
  \end{table}

Another possible route is adjusting $p$-values to obtain approximate posterior probabilities for the null hypothesis. In this line of thought, \citet*{SBB2001} introduced an easy to calculate link between p-values and Bayes factors (and posterior probabilities of hypothesis). We will call it the ``Robust Lower Bound" (RLB), as it provides a lower bound for the Bayes  which is valid in many situations.

\begin{equation} 
B_{01}  \geq  B_L(p_{val}) = -e  p_{val} \log_e(p_{val})
\end{equation}

From this expression, a lower bound on the posterior probability of the null hypothesis can be obtained.

\begin{equation}
 P(H_0|Data)\geq [1+\frac{1}{-exp(1)*Pval*log_e(Pval)}]^{{-1}}=\min{P(H_0|P_{val})}
\end{equation}
These bounds are very simple and can be easily calculated, but become less informative when $n$ increases.

Table \ref{tab:RLB}  shows the behavior of the RLB for different p-values. It introduces an important correction, broadly similar to the suggestions in \citet{Johnson2013}, but still static and inconsistent.


\begin{table}
\begin{center}
\begin{tabular}{lr}
$P_{val}$ &  $\min{P(H_0|P_{val})}$  \\
  \hline
   0.05&0.289  \\
   0.01&0.111  \\
   0.005& 0.067 \\
   0.001&0.018  \\
   0.0005& 0.010 \\
  0.0001&0.002\\
  \hline
\end{tabular}
\caption{Robust Lower Bounds for the null hypothesis (\citet{SBB2001}) for different p-values.}
\label{tab:RLB}
  \end{center}

  \end{table}

In the sequel, we propose a calibration of the Robust Lower Bound with the sample size which is simple, widely available and exhibits a reasonable behavior even for very small samples. We show that this calibration is consistent for large samples and as $p$-value$\rightarrow 0$.

\section{The Adaptive Robust Lower Bound ARLB}

In this section, we use the Robust Lower Bound of \citet{SBB2001} and the Adaptive Alpha Level in  \citet{PP2014} for obtaining a lower bound calibrated by the sample size $n$.
%
%
%

Consider the approximation of a log Bayes Factor for a null hypothesis vs an alternative hypothesis with difference in dimension equal to $q$ (i.e. the alternative has q more adjustable parameters):

\begin{eqnarray}
-2log(B_{01})&=&-2 \log(\frac{f_0(\mathbf{x}|\hat{\theta}_0)}{f_1(\mathbf{x}|{\hat\theta}_1)})-q\log(n^*) + C^* \nonumber \\
& \approx & \chi^2_{\alpha}(q)-q\log(n^*) + C^*. 
\end{eqnarray}

From this expansion $\alpha_{n^*}(q)$ as a function of the (effective) sample size $n^*$ is obtained and shown in equation (\ref{eq:alphan}) , so that if the $\alpha$ level is decreased as in (\ref{eq:alphan}) then the Bayes Factor is approximately constant as the information accumulates. 

Using these ideas, a calibration of the Robust Lower Bound could be obtained \noindent correcting the RLB (as a function of the sample size) so  that it converges to a constant when evaluated in  (\ref{eq:alphan}). The natural constant to choose is one, placing the hypotheses ``on equal footing". 

In symbols:

\vspace{0.25cm}
\begin{equation}
B_L(\alpha_{n^*}(q)) \times g_q(n^*)\rightarrow 1.
\label{eq:RLBcalib}
\end{equation}
\noindent where $B_L(\alpha) = -e \alpha \log \alpha$.

Clearly the stabilizing function $g$ is bound to depend (at least) on $n^*$ and $q$.\\
\begin{theorem}
The stabilizing function in (\ref{eq:RLBcalib}) is found to be:
\[
g_q(n^*)= \left[\frac{2 \cdot n^*}{\chi^2_{\alpha}(q)+q\cdot log(n^*)}\right]^{q/2} \cdot \frac{\Gamma(q/2)}{e}.
\]
\end{theorem}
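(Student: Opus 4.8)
The plan is to read the defining relation (\ref{eq:RLBcalib}) as an equation to be \emph{solved} for $g_q(n^*)$: since we want $B_L(\alpha_{n^*}(q))\,g_q(n^*)\to 1$, the natural candidate is $g_q(n^*)=1/B_L(\alpha_{n^*}(q))$, and the task reduces to rewriting this reciprocal in closed form using the explicit expression (\ref{eq:alphan}) for $\alpha_{n^*}(q)$. First I would expand $B_L(\alpha_{n^*}(q))=-e\,\alpha_{n^*}(q)\log\alpha_{n^*}(q)=e\,\alpha_{n^*}(q)\bigl(-\log\alpha_{n^*}(q)\bigr)$, so that everything is expressed through $\alpha_{n^*}(q)$ and its logarithm.

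The analytic heart of the argument is the asymptotic identity $-\log\alpha_{n^*}(q)\sim \tfrac12\bigl(\chi^2_\alpha(q)+q\log(n^*)\bigr)$ as $n^*\to\infty$. Writing $A=\chi^2_\alpha(q)+q\log(n^*)$ for brevity, this is transparent from the origin of (\ref{eq:alphan}) as the leading term of the chi-square survival function, $P(\chi^2_q>A)\approx A^{q/2-1}e^{-A/2}/\bigl(2^{q/2-1}\Gamma(q/2)\bigr)$: on taking logarithms the dominant contribution is $-A/2$, while $(q/2-1)\log A$ and the additive constants are of lower order. Equivalently, taking $\log$ directly in (\ref{eq:alphan}) isolates $-(q/2)\log(n^*)$ as the leading term, and $\tfrac12 A\sim \tfrac{q}{2}\log(n^*)$ matches it. I would record this as a short lemma and verify that the remainder is $o(\log n^*)$ for fixed $q$, since $\log A=O(\log\log n^*)$.

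Next I would substitute this approximation into $B_L$, obtaining $B_L(\alpha_{n^*}(q))\approx \tfrac{e}{2}\,\alpha_{n^*}(q)\,A$, and then insert the closed form (\ref{eq:alphan}). Setting $g_q(n^*)=1/B_L(\alpha_{n^*}(q))$ and collecting the powers of $2$, of $n^*$, of $A$, and the factor $\Gamma(q/2)$, all the pieces cancel and reassemble into $\bigl[2n^*/A\bigr]^{q/2}\,\Gamma(q/2)/e$, which is the claimed formula. As an independent check I would run the computation in the forward direction, which is cleaner: multiplying the candidate $g_q(n^*)$ by $\alpha_{n^*}(q)$ yields the \emph{exact} elementary identity $\alpha_{n^*}(q)\,g_q(n^*)=2/(eA)$, whence $B_L(\alpha_{n^*}(q))\,g_q(n^*)=-2\log\alpha_{n^*}(q)/A$, and the asymptotic of the previous paragraph forces this to tend to $1$.

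The main obstacle is the asymptotic step, not the algebra: one must check that the lower-order logarithmic terms in $-\log\alpha_{n^*}(q)$ genuinely vanish relative to $A/2$, so that the ratio tends to exactly $1$ and not merely to a constant. A related bookkeeping point is the normalizing constant $C_\alpha$ of (\ref{eq:alphan}): carried through verbatim it turns the exact identity into $\alpha_{n^*}(q)\,g_q(n^*)=2C_\alpha/(eA)$ and makes the product converge to $C_\alpha$ rather than to $1$, so the derivation implicitly fixes $C_\alpha=1$ (the ``equal footing'' normalization), which should be stated explicitly for the limit in (\ref{eq:RLBcalib}) to equal one.
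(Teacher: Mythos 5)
Your proposal is correct and follows the same basic route the paper intends: take $g_q(n^*)$ to be (asymptotically) the reciprocal of $B_L(\alpha_{n^*}(q))$ and use the fact that $-\log\alpha_{n^*}(q)\sim\tfrac12\bigl(\chi^2_\alpha(q)+q\log(n^*)\bigr)$ to get a closed form. In fact your write-up is substantially more complete than the paper's Appendix A, which only asserts that ``Calculus shows'' the product stabilizes, states the candidate $g_q$ with what appears to be a typographical garbling of the denominator (it writes $\log(\chi^2_\alpha(q)+q\cdot n^*)$ and omits the $2^{q/2}\,\Gamma(q/2)/e$ normalization), and reports a limiting constant of $e/2$ rather than the value $1$ demanded by the defining relation (\ref{eq:RLBcalib}). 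Your forward check --- the exact identity $\alpha_{n^*}(q)\,g_q(n^*)=2C_\alpha/(eA)$ with $A=\chi^2_\alpha(q)+q\log(n^*)$, followed by $B_L(\alpha_{n^*}(q))\,g_q(n^*)=-2C_\alpha\log\alpha_{n^*}(q)/A\to C_\alpha$ --- is precisely the computation needed to confirm the constants in the theorem as stated, and your observation that the lower-order term $(q/2-1)\log A=O(\log\log n^*)$ must be checked, and that the construction implicitly normalizes $C_\alpha=1$, are both points the paper leaves unaddressed.
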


\begin{proof} See Appendix A. \end{proof}



\vspace{0.25cm}

We define the adaptive odds  bound and probability, respectively, as
\[
O_L(\alpha,q,n^*)=B_L(\alpha) \cdot g_q(n^*)
\]
\[
P_L(\alpha,q,n^*)=\frac{1}{1+\frac{1}{O_L(\alpha,q,n^*)}}.
\]
$O_L$ and $P_L$ will be called the {\em Adaptive Robust Lower Bounds (ARLB) for the posterior odds and probabilities} respectively.

Note that, by construction, $\alpha_{n^*}(q)$ and $O_L(\alpha,q,n^*)$ should be broadly consistent. One advantage of the latter is that the robust lower bound automatically fixes the odds as equal to the bound for very small values of the sample size. \\
The result of Theorem 1 is somewhat surprising, and to the best of our knowledge novel. The one-dimensional case $q=1$, is illuminating,
\vspace{0.25cm}
\begin{equation}
O_L(\alpha,1,n^*)=[-e\cdot \alpha\cdot log(\alpha)]\times
\sqrt{\frac{2\cdot \pi\cdot n^*}{e^2\cdot (\chi^2_{\alpha}(1)+log(n^*))}}.
\label{eq:RLBcalibOneDimension}
\end{equation}
Notice that the correction of the Robust Lower Bound is of order of square root of $n^*$ over $logn^*$.
%
%
%
%

\subsection{Examples}

\subsubsection{Testing the value of a normal mean, variance known}

Let $X_1, X_2, \ldots, X_n$ be an i.i.d sample from $N(\theta,\sigma^2)$,  where $\sigma^2$ known. It is desired to test $H_0: \theta=\theta_0$ vs $H_1: \theta \neq \theta_0$.

Assume a prior for $\theta$ that is $N(\theta_0,\tau^2)$. It is also usual to select $\tau^2 = k\sigma^2$. We will use $k=1$ and  $k=2$. This last value corresponds to  the intrinsic prior for $\theta$ \citep{BerPer14}. 

The Bayes factors obtained using these priors have the form

\[ B_{01} = \sqrt{1+kn}\exp\left(\frac{-\frac{1}{2}z^2}{1+1/(kn)} \right)\]

\noindent where $z=\sqrt{n}(\bar{x}-\theta_0)/\sigma$

We  will compare the Bayes Factors for these priors with the RLB and the ARLB.

Figure \ref{fig:normal} shows the posterior probability for the null hypothesis $H_0$ for $n=50$ and $n=500$ using both priors, the Robuts Lower Bound and the ARLB. Note that $P_L(\alpha,q,n^*)$ looks very similar to the result obtained using a normal prior with the same variance as the likelihood ($k=1$), and leads to an inference very close, though slightly less conservative, than the intrinsic prior.

Note that for $n=50$ a $p$-value of $0.05$  corresponds to a posterior probability for $H_0$ around $0.5$. This is clearly not enough for rejecting the null hypothesis. 

\begin{figure}[tbhp]
\begin{center}
\begin{tabular}{cc}
\includegraphics[width=0.42\linewidth]{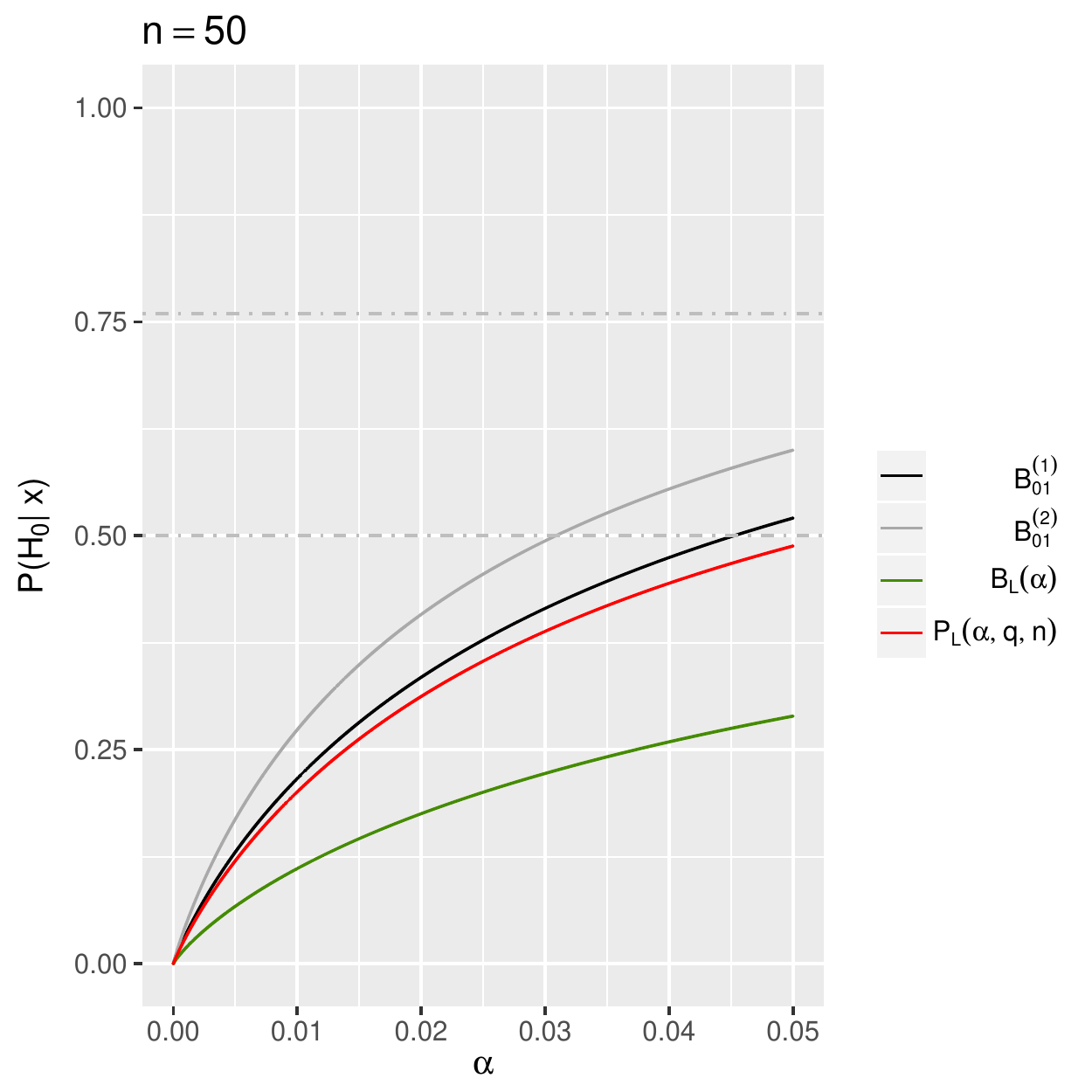} & \includegraphics[width=0.42\linewidth]{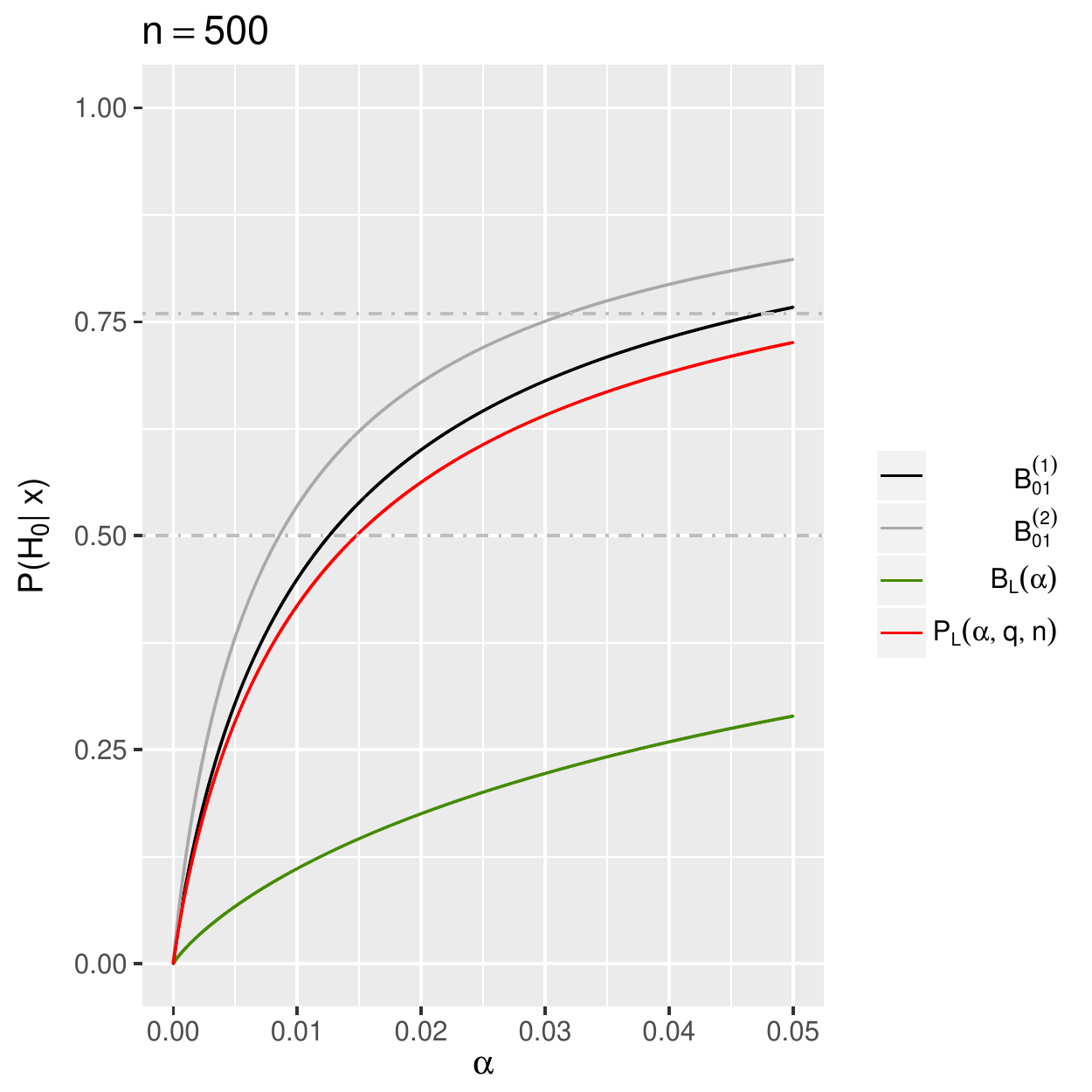}\\
\end{tabular}
\end{center}
\caption{Posterior probability for the null hypothesis $H_0$ for $n=50$ and $n=500$ using normal priors, the Robust Lower Bound and the ARLB ($P_L(\alpha,q,n^*)$) }
\label{fig:normal}
\end{figure}

\subsubsection{Testing the value of a normal mean, variance unknown}
Let $X_1, X_2, \ldots, X_n$ be an i.i.d sample from $N(\theta,\sigma^2)$, $\sigma^2$ unknown. As before, It is desired to test $H_0: \theta=\theta_0$ vs $H_1: \theta \neq \theta_0$.

\citet{Almodovar} presents an approximation for the Bayes factor based on the intrinsic prior

\begin{footnotesize}
\[ B_{01}^{IP} \approx \sqrt{n}\left(1+\frac{t^2_{n-1}}{n-1} \right)^{-n/2} \cdot \frac{t^2_{n-1}}{n-1} \cdot \frac{1}{1+\exp(1+t^2_{n-1}/(n-1))} \]
\end{footnotesize}

\noindent where  $t_{n-1} = \sqrt{n} \cdot \frac{(\bar{x}-\theta_0)}{s}$ is the usual $t$ statistic.

Another objective Bayes Factor is based on the Robust prior proposed by \citet{Berger1985}
\begin{footnotesize}
\[
B^R_{01}= \sqrt{\frac{2}{n + 1}} \left( \frac{n-2}{n-1} \right) (t_{n-1})^2 \left(1+\frac{(t_{n-1})^2}{n-1} \right)^{-\frac{n}{2}} \left(1- \left(1+2 \frac{(t_{n-1})^2}{n^2-1} \right)^{-\frac{n-2}{2}} \right)^{-1}
\]
\end{footnotesize}

When comparing these objective Bayes factors with the ARLB, we can see that the behavior of the later is very close to the approximated intrinsic Bayes factor in \citet{Almodovar} and compatible with the Bayes factor based on Berger's robust prior. 

\begin{figure}
\begin{center}
\begin{tabular}{cc}
\includegraphics[width=0.38\linewidth]{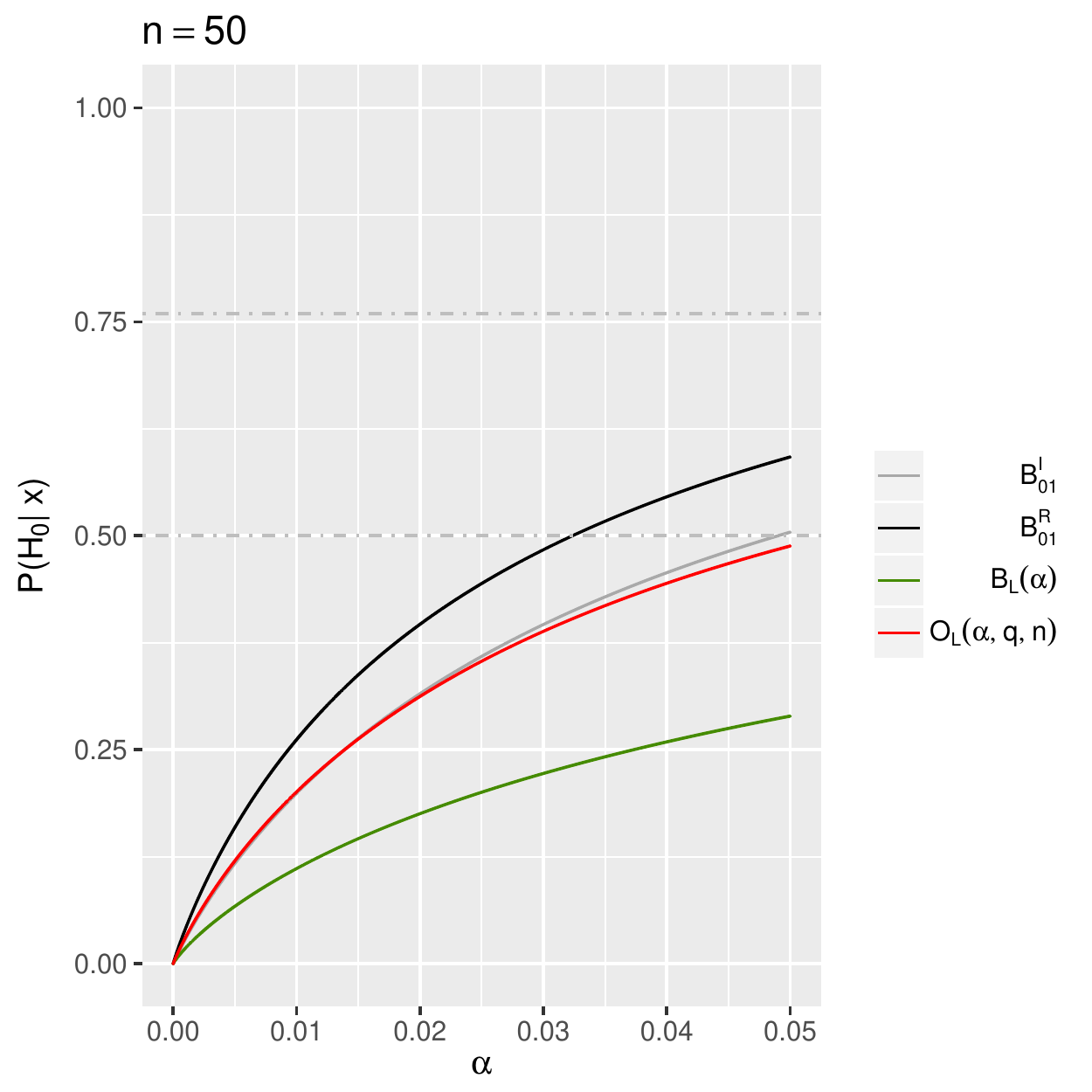} & \includegraphics[width=0.38\linewidth]{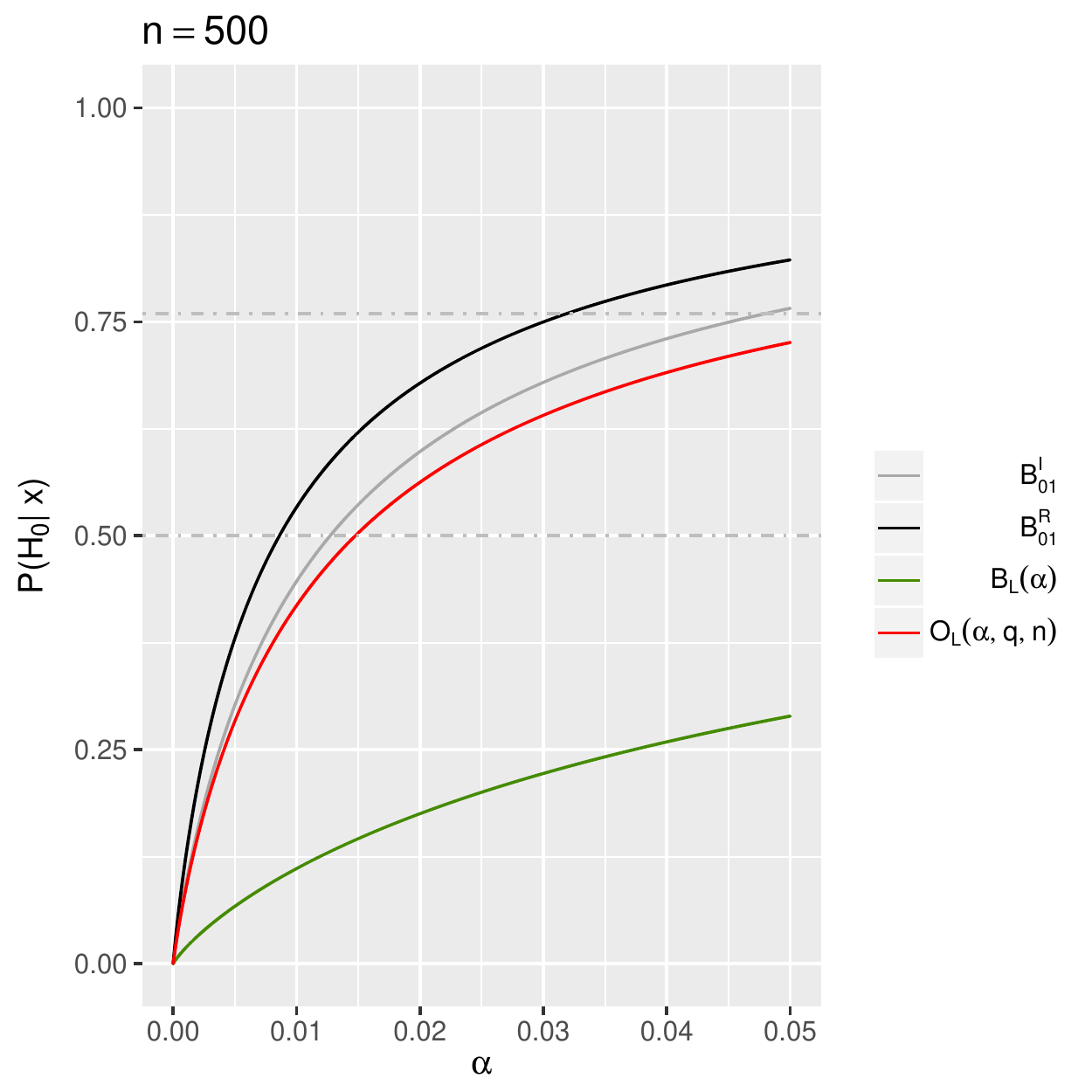}\\
\end{tabular}
\end{center}
\caption{Posterior probability for the null hypothesis $H_0$ for $n=50$ and $n=500$ using Bayes factors corresponding to the Intrinsic prior \citep[approximation in][]{Almodovar}, Berger's Robust prior \citep{Berger1985}, the Robust Lower Bound and the ARLB ($P_L(\alpha,q,n^*)$) }
\label{fig:normalsigma}
\end{figure}

\subsubsection{A non normal example: exponential distribution}
Let $X_1, X_2, \ldots, X_n$ be a random sample from an Exponential($\lambda$) distribution, with $f(x|\lambda) = \lambda e^{-\lambda x}$. Assume we want to test the hypotheses

\[ H_0: \lambda=\lambda_0 \mbox{   vs   } H_1: \lambda \neq \lambda_0\] 

In \citet{PericchiA} the intrinsic prior for this comparison is shown to be
\[ \pi^I(\lambda) = \frac{\lambda_0}{(\lambda+\lambda_0)^2} \]

\noindent This prior is a particular case of the Scaled Beta 2 studied by \citet{PPRSB2}. In particular, the intrinsic prior for testing the parameter of an exponential distribution is a $Sbeta2(1,1,\frac{1}{\lambda_0})$.

The Bayes factor is calculated as

\[ B^I_{01} = \frac{m_0(\mathbf{x})}{m_1(\mathbf{x})}\]

\noindent where
\begin{eqnarray*}
m_0(\mathbf{x})&=&\lambda_0^n \exp(-n \lambda_0 \bar{x})\\
m_1(\mathbf{x}) &= & \frac{ \Gamma(n)}{ \lambda_0(n\bar{x})^{n-1}} \left\{n(\frac{\bar{x}}{\lambda_0}+1)\mathrm{e}^{n\frac{\bar{x}}{\lambda_0}}E_{n}(n\frac{\bar{x}}{\lambda_0})-1\right\}\\
\end{eqnarray*}

\noindent and $E_n(x)=\int_1^{\infty} \frac{\exp(-xt)}{t^n}dt$ is the {\em Exponential Integral Function} \citep[][section 5.1]{Abra} . For this last calculation, see Appendix B. 

For the graphs shown in Figure \ref{fig:exp}, we used $\lambda_0=1$, and the value of $\bar{x}$ was approximated using the chi-square approximation to the likelihood ratio test. We display results obtained using the lower solution, but the graphs obtained with the upper solution are completely equivalent. It can be seen that the Adaptive Robust Lower Bound follows very closely the conclusions obtained with the Bayes factor based on the intrinsic prior.

\begin{figure}
\begin{center}
\begin{tabular}{cc}
\includegraphics[width=0.38\linewidth]{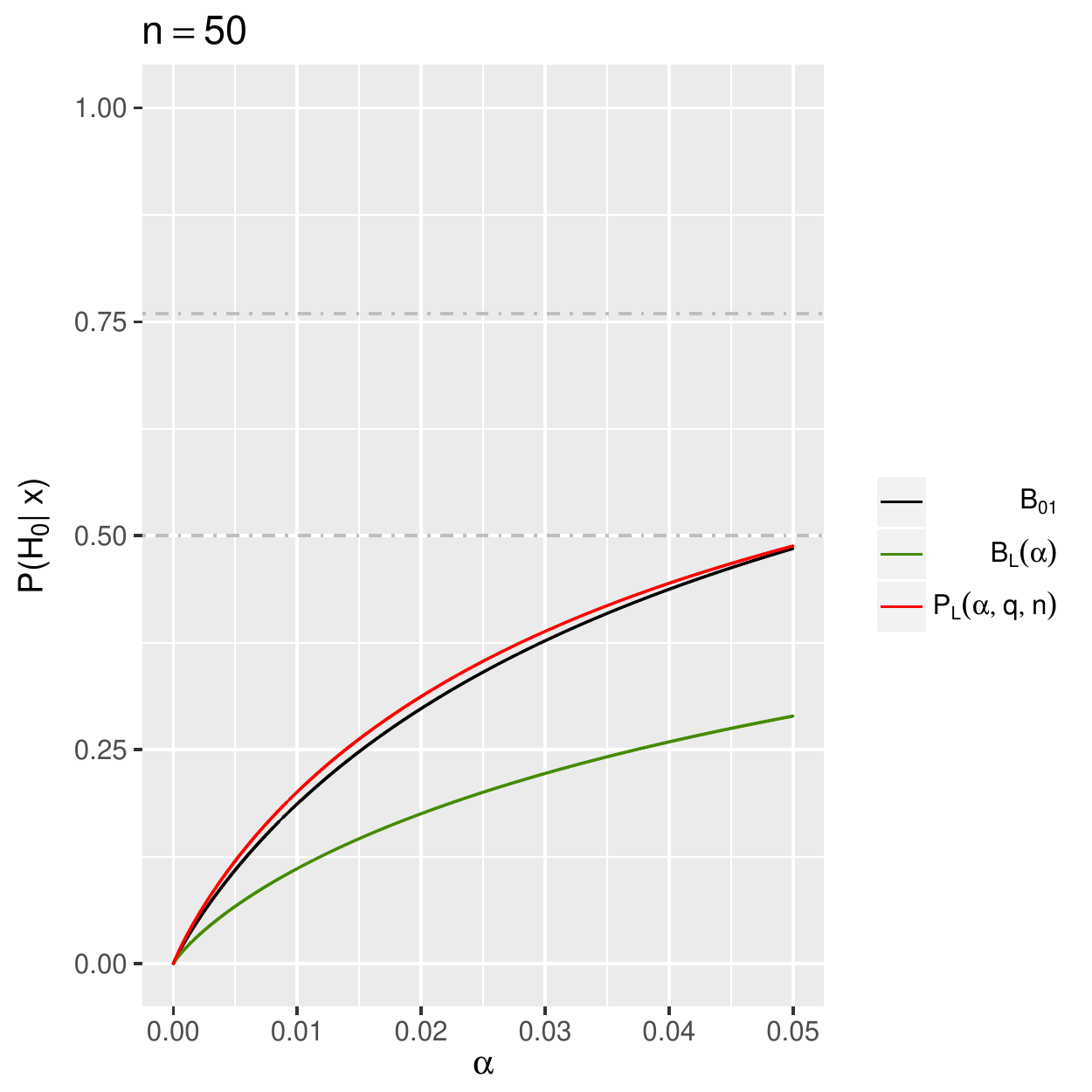} & \includegraphics[width=0.38\linewidth]{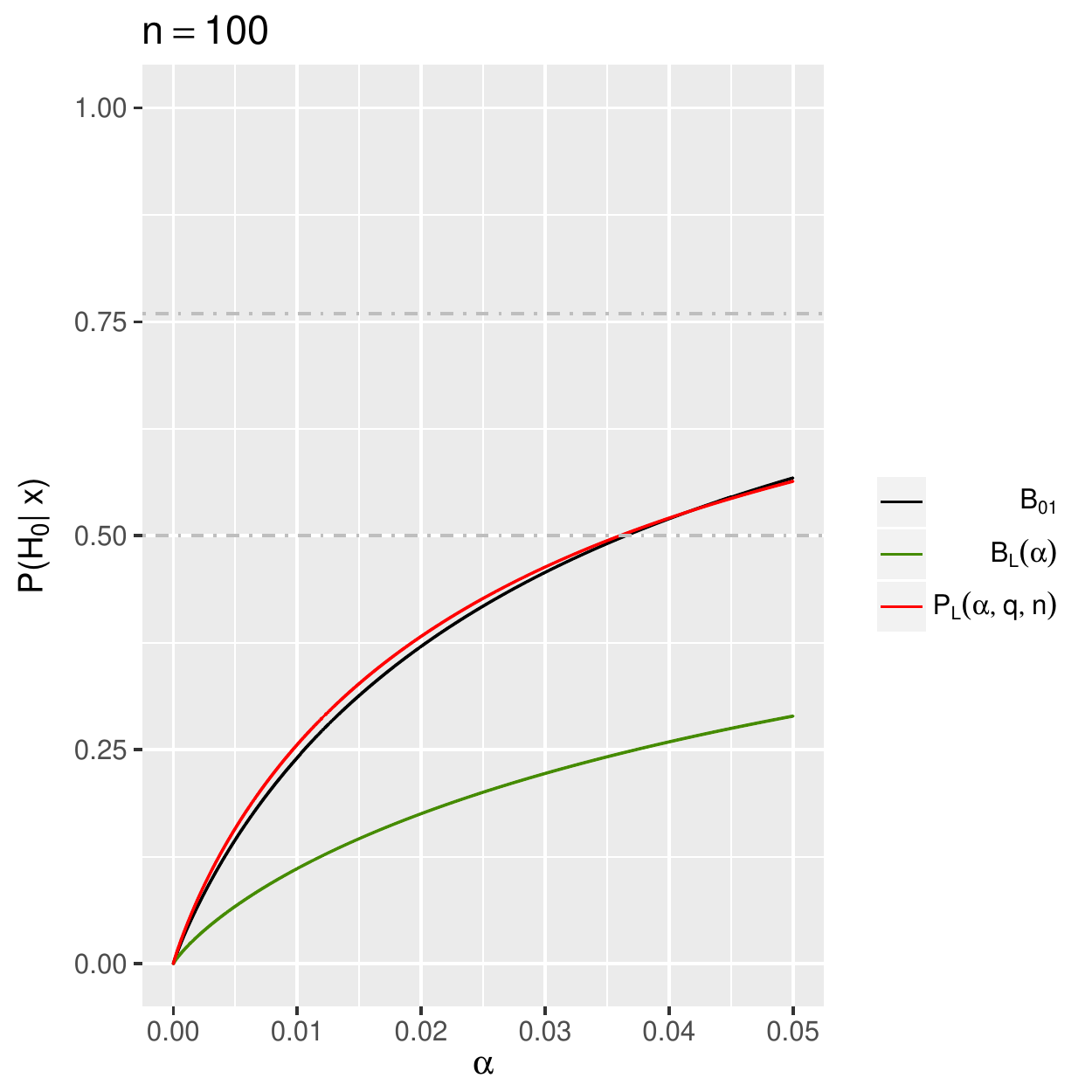}\\
\end{tabular}
\end{center}
\caption{Posterior probability for the null hypothesis $H_0: \lambda_0=1$ for $n=50$ and $n=100$ using the Bayes factors corresponding to the Intrinsic prior \citep{PericchiA},  the Robust Lower Bound and the ARLB ($P_L(\alpha,q,n^*)$) }
\label{fig:exp}
\end{figure}

\subsubsection{Models differing in more than one parameter: comparing nested linear models}

In \citet{Giron}, the authors approach an objective Bayesian setting for the problem of comparing nested linear models.

\begin{eqnarray*}
\bm{M_0}: \bm{N}_n(\bm{y}| \bm{X}_1\bm{\gamma_1},\sigma_0^2 \bf{I}_n) &,&\pi_0^D(\bm{\gamma_1},\sigma_0) \propto \frac{c_0}{\sigma_0^{q_0}}\\
\bm{M_0}: \bf{N}_n(\bm{y}| \bm{X}\bm{\alpha},\sigma_1^2 \bf{I}_n) &,&\pi_0^D(\bm{\alpha},\sigma_1) \propto \frac{c_1}{\sigma_0^{q_1}}\\
\end{eqnarray*}

\noindent where $\bm{X}$ is a $n\times k$ design matrix and  $\bm{X}=(\bm{X}_0|\bf{X}_1)$, with  $\bm{X}_1$ is a $n\times k_1$ submatrix of $\bm{X}$. $q_0$ and $q_1$ are non-negative integers
(usual reference priors correspond to $q_0 = q_1 = 1$, Jeffreys priors correspond to $q_0 = k_1 + 1$ and $q_1 = k + 1$,  and modified Jeffrey priors correspond to $q_0=1$ and $q_1=k-k_1+1$).

The authors find intrinsic priors for this setting and under certain conditions they calculate the following expression for the intrinsic Bayes factor

{\footnotesize
\[
B_{10}(\bm{D}) = \frac{2(k+1)^{k_0/2}}{B(q_1/2,1/2)} \int_0^{\infty} \frac{(\sin \phi)^{k_0+q_0-1} }{(\cos \phi)^{k_1+q_1-1}} \frac{(n+(k+1)\sin^2\phi)^{(n+q_0-k-1)/2}}{(n{\cal B}_n +(k+1)\sin^2\phi)^{(n+q_0-k_1-1)/2}} d\phi
\]
}

Here ${\cal B}_n=\frac{SC}{SC_1}$ with {\footnotesize $SC= \bm{y}' (I+\bm{X}(\bm{X}'\bm{X})^{-1}\bm{X}')\bm{y}$ } ,{\footnotesize $SC= \bm{y}' (I+\bm{X}_1(\bm{X}_1'\bm{X}_1)^{-1}\bm{X}_1')\bm{y}$ }

The authors provide calibration curves for $p$ values based on this result.

In Figure \ref{fig:lm} we compare these intrinsic Bayes factors based on reference priors, Jeffreys priors and modified Jeffreys priors with the calibration of the lower bound and with $BIC(\alpha)$. Note that for $q=1$ both BIC and ARLB have a similar behavior, which leads to conclusions similar to those obtained with exact Bayes factors. When $q=3$, though, different priors lead to different conclusions, with the ARLB providing inference similar to the Intrinsic Prior based on the Reference prior and BIC being more compatible with the inference obtained with the Jeffreys and the Modified Jeffreys prios.

\begin{figure}
\begin{center}
\begin{tabular}{cc}
\includegraphics[width=0.38\linewidth]{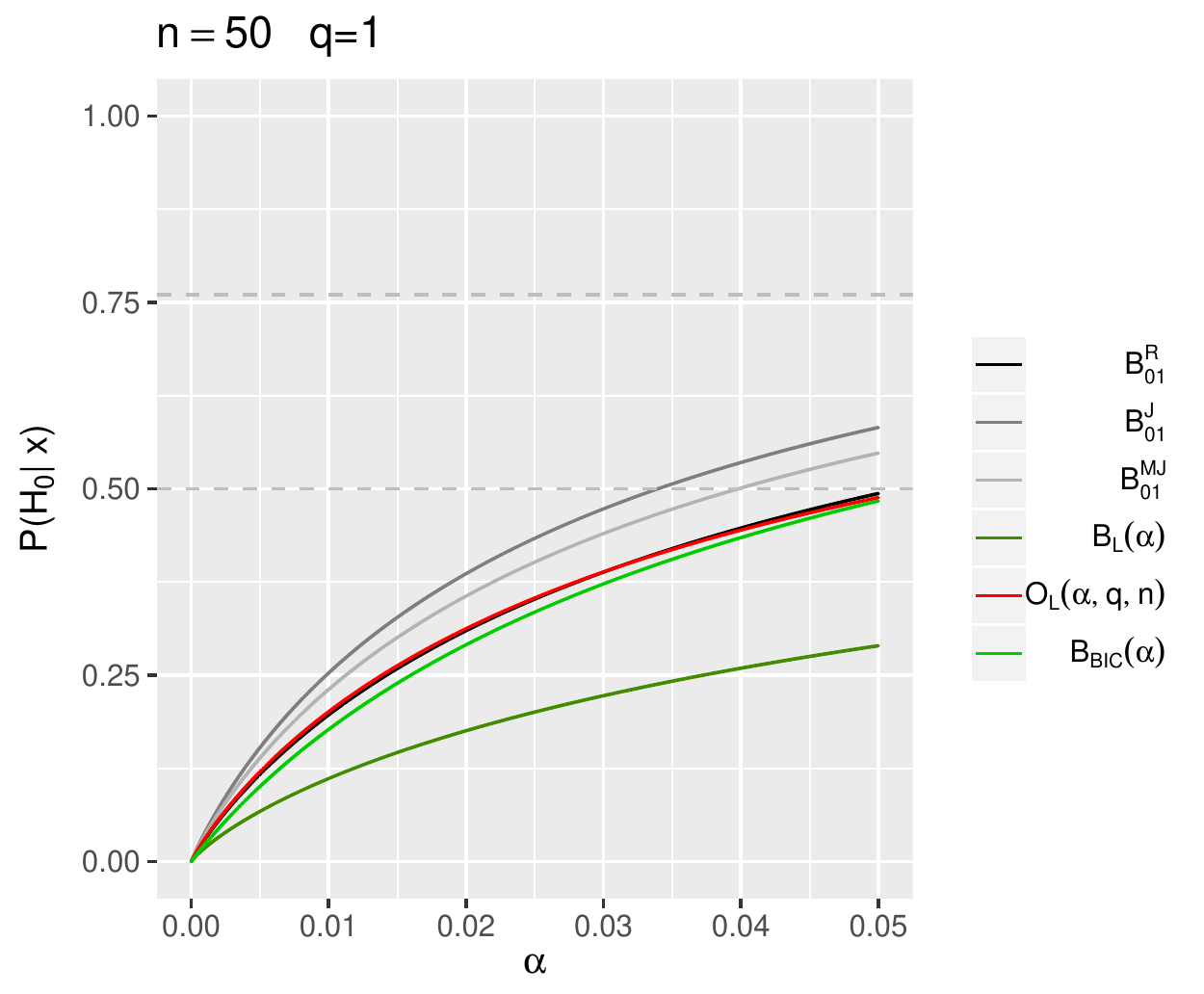} & \includegraphics[width=0.38\linewidth]{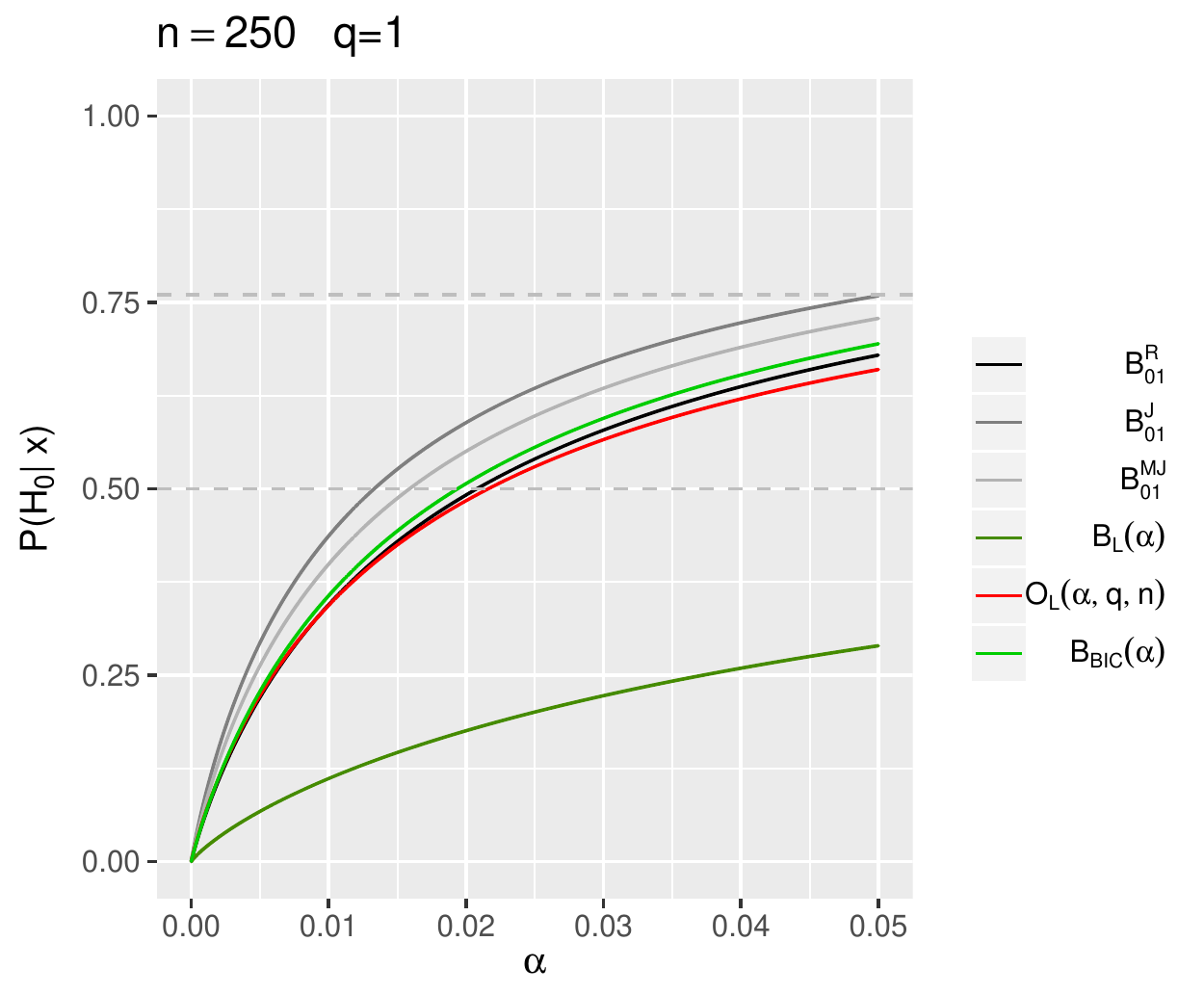}\\
\includegraphics[width=0.38\linewidth]{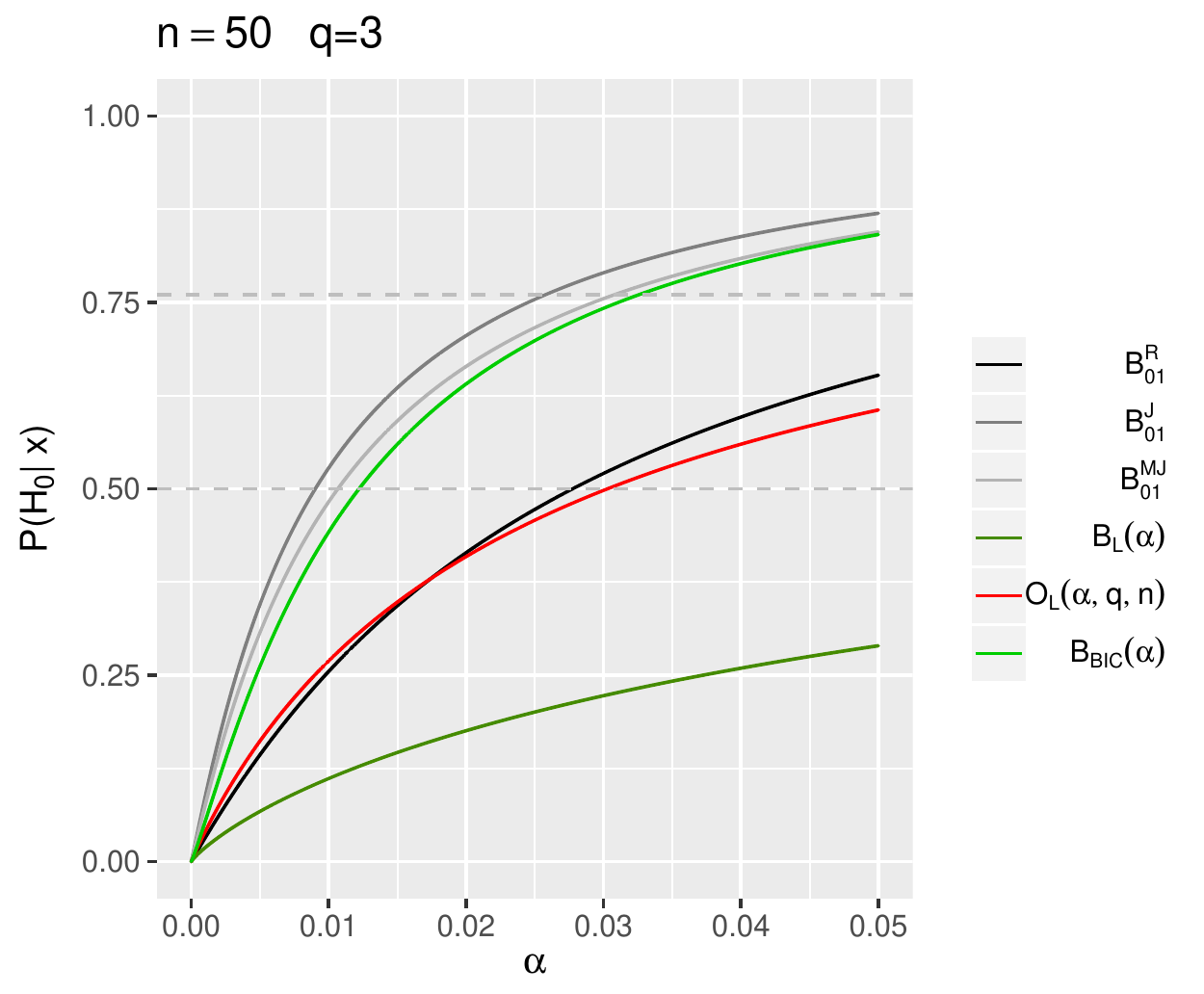} & \includegraphics[width=0.38\linewidth]{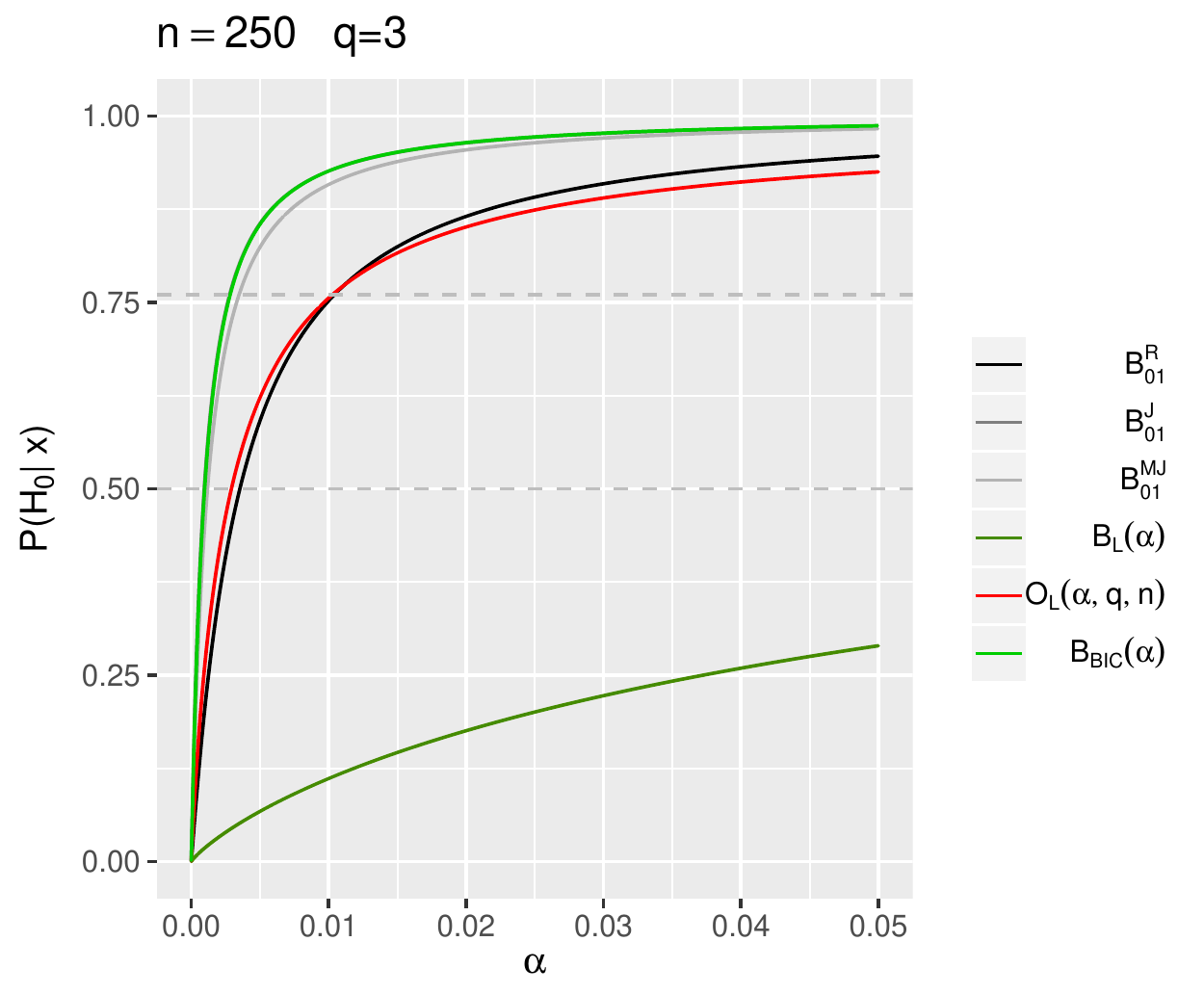}\\
\end{tabular}
\end{center}
\caption{Posterior probability of  $H_0$ using intrinsic Bayes factors based on Reference priors, Jeffreys priors and modified Jeffreys priors \citep{Giron}, BIC, the Robust Lower Bound and the ARLB ($P_L(\alpha,q,n^*)$) .}
\label{fig:lm}
\end{figure}

\subsubsection{An application: Hald's data}

Consider Hald's regression data (Draper and Smith, 1981), consisting of $n = 13$ observations on a dependent variable $y$, with four potential regressors: $x_1$, $x_2$, $x_3$, $x_4$.

Here we will follow an ``encompassing approach" (as Berger and Pericchi 1996), comparing all models with the full model
\[y=c+\beta_1 x_1 + \beta_2 x_2 + \beta_3 x_3 +\beta_4 x_4 + \varepsilon, \varepsilon \sim N(0,\sigma^2) \]

We will consider only those cases where the p-value is lower that $\frac{1}{e}$ (so the robust lower bound is valid)

{\tiny
\begin{center}
\begin{tabular}{lrrrrrrr}
\hline
&\multicolumn{3}{c}{IBF}&& & 	&	\\
\cline{2-4}
Model &	\multicolumn{1}{c}{Ref.} & \multicolumn{1}{c}{Jeffreys} & \multicolumn{1}{c}{M. Jeff.} &\multicolumn{1}{c}{$B_{lb}(p)$}& \multicolumn{1}{c}{$O_L(p,q,n)$}  & \multicolumn{1}{c}{$B_{(BIC)}$} & \multicolumn{1}{c}{p-val}\\
\hline
234c & 0.60776 &0.73697&0.67515&{\bf 0.50970}&0.70192&{\color{red} 0.21582}&0.07082\\
13c  & 0.00018 & 0.00009 & 0.00010 & {\bf 0.00008} &0.00008 & {\color{red} 0.00000} & 0.00000\\
14c	& 1.15714 & 2.31974 & 1.76313 & {\bf 0.81460} & 0.89583  & {\color{red} 0.71623} & 0.16800\\
23c	& 0.00586 & 0.00335 & 0.00386 & {\bf 0.00414} & 0.00414  & {\color{red} 0.00001} & 0.00018\\
24c	& 0.00056& 0.00030 & 0.00031 & {\bf 0.00029} & 0.00029	  & {\color{red} 0.00000} & 0.00001\\
34c	& 0.07650 & 0.06740 & 0.07492 & {\bf 0.07782} & 0.07782  & {\color{red} 0.00277} & 0.00550\\
1c	& 0.00030 & 0.00014 & 0.00014 & {\bf 0.00016} & 0.00016 & {\color{red} 0.00000} & 0.00000\\
2c	& 0.00089 & 0.00043 & 0.00046 & {\bf 0.00055} & 0.00055 & {\color{red} 0.00000} & 0.00002\\
3c	& 0.00007 & 0.00003 & 0.00003 & {\bf 0.00003} & 0.00003  & {\color{red} 0.00000} & 0.00000\\
4c	& 0.00096 & 0.00047 & 0.00050 & {\bf 0.00061} & 0.00061  & {\color{red} 0.00000} & 0.00002\\
\hline
\end{tabular}
\end{center}}

\begin{itemize}
\item The values of the lower bound and its adjustments are compatible with the intrinsic Bayes factors.
\item Values of $B_{BIC}$ are uniformly lower than the Robust Lower Bound of Sellke et al. (sample size $n=13$).
\end{itemize}

\section{Consistency and Robustness}
The Adaptive Robust Lower Bound has two important {\em consistency properties} that yield weight to its justification as a safe measure of evidence of a Hypothesis. Even some {\it{exact}} Bayes Factors (i.e. Bayes Factors that arise with proper, usually conjugate, priors)
does not enjoy these two desirable consistency properties,
a case in point are the popular "g-priors", which fails the second consistency test, Berger and Pericchi (2001). Here we make the simplifying and usual assumption, that the sampling model does not change as the sample size grows. We also assume that the sampling model is a candidate model.

\begin{definition}{ \underline{Large Sample Consistency}:}
A measure of {\it{Probability Evidence}} is {\it{large sample consistent}} if as the sample size grows the probability of the sampling model converges to one, and the probability of all other alternative models converges to zero.
\end{definition}
\begin{definition}{\underline{Large Likelihood Consistency}:}
A measure of  {\it{Probability Evidence}} is {\it{large likelihood consistent}} for the probability evidence goes to one when the likelihood ratio of
it against any competitor model goes to infinity.
\end{definition}
\begin{remark}
Real Bayes Factors, for a fixed number of parameters as the sample grows, obey large sample consistency, but not necessarily large likelihood
consistency, Berger and Pericchi (2001). It is therefore an splendid recommendation for the Adaptive Lower Bound.
\end{remark}
\begin{remark}
BIC, with the correct notion of "Effective Sample Size",  Berger, Bayarri and Pericchi (2013)  obeys both kinds of consistency, but can be a very bad approximation for small to moderate data sets. On the other hand the Robust Lower Bound is valid and informative, for virtually all sample sizes. The Adaptive Robust Lower Bound takes advantage of the Roibust Lower Bound for small samples and of the Bayes Factor Asymptotics for moderate to large sample sizes.
\end{remark}

Before proceeding it is convenient to prove a useful lemma about the Robust Lower Bound.

\begin{lemma}
$B_L(p_{val})=-e \cdot  p_{val} \cdot log(p_{val}) \ge e \cdot  p_{val} >p_{val}$, $\mbox{   }$ for $\mbox{   }$,  $0<p_{val}<1/e$, where $e$ is Euler constant $e=\exp(1)$.
\end{lemma}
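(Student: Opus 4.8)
The plan is to verify the two inequalities separately, since both reduce to elementary facts about the logarithm on the interval $(0,1/e)$. Throughout I write $p = p_{val}$ for brevity and keep in mind that $\log$ denotes the natural logarithm, so that $\log(1/e) = -1$.

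First I would establish the left inequality $-e\,p\,\log(p) \ge e\,p$. Since $p > 0$, I can divide both sides by the positive quantity $e\,p$, which reduces the claim to $-\log(p) \ge 1$, equivalently $\log(p) \le -1$, equivalently $p \le e^{-1} = 1/e$. This is exactly the hypothesis; in fact the strict version $-e\,p\,\log(p) > e\,p$ holds whenever $p < 1/e$, with equality only at the endpoint $p = 1/e$, which is precisely why the statement is phrased with $\ge$ rather than $>$. The one point worth recording is that $\log(p) < 0$ for $p \in (0,1)$, so $B_L(p) = -e\,p\,\log(p)$ is positive and the division preserves the direction of the inequality.

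For the right inequality $e\,p > p$, I would simply observe that $e = \exp(1) > 1$ and $p > 0$, so scaling $p$ by a factor strictly greater than one strictly increases it. Chaining this with the previous step then yields the full chain $B_L(p) \ge e\,p > p$ on $(0,1/e)$.

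There is no genuine obstacle here: the lemma is a direct consequence of the monotonicity of $\log$ together with the identity $\log(1/e) = -1$. The only care required is to track the sign of $\log(p)$ when dividing through by $e\,p$, and to note the sharpness of the middle inequality at $p = 1/e$. The usefulness of the lemma for the sequel is that it certifies $B_L$ is a genuine inflation of the $p$-value on its valid range $0 < p < 1/e$, a fact the consistency arguments will invoke.
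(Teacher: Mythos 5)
Your proof is correct and follows essentially the same route as the paper: both arguments divide $B_L(p)$ by $p$ and reduce the claim to $-e\log(p) \ge e$, i.e.\ $\log(p) \le -1$ on $(0,1/e)$. The only cosmetic difference is that the paper establishes this by showing $g(p) = -e\log(p)$ is decreasing (via its derivative) with value $e$ at the endpoint $p = 1/e$, whereas you observe the equivalence directly from the monotonicity of $\log$, which is slightly more economical but not a different idea.
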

\begin{proof}Lets call $g(p_{val})=B_L(p_{val})/p_{val}=-e \log(p_{val})$. Note that $g(p_{val})$ is equal to $e$ for $p_{val}=1/e$ and taking the limit as $p_{val}\rightarrow 0$, $g(p_{val})$ grows without bound. Now taking the derivative of   $g(p_{val})$ with respect to $p_{val}$ yields $-e / p_{val}$, which is negative, and thus $g$ is decreasing with minimum at $1/e$. Therefore, $g(p_{val})  \ge e$ which implies $B_L(p_{val}) \ge e \cdot p_{val}>  p_{val}$.
\end{proof}
\vspace{0.25cm}

Before proceeding we state another useful Lemma, the inequality of quantile and p-values. We set the discussion for one sided alternatives, that is $H_0: \theta=0 \mbox{  VS  } H_1:\theta > 0$. Adapting to $H_0: \theta=0 \mbox{  VS  } H_1: \theta \neq 0$, is simple.

\begin{lemma}
$Z_{p-{val} }< \sqrt{(-2 \cdot log(p_{val}))}$, $\mbox{   }$,  for $Z_{p-{val}}>1/\sqrt{2\cdot \pi}$, or equivalently $0<p_{val}<1-\Phi(1/\sqrt{2\cdot \pi})=0.345$.\\
\end{lemma}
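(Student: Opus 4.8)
The plan is to translate the claimed inequality into an equivalent statement about the Gaussian tail and then invoke the classical Mills-ratio bound. In the one-sided setup $H_0:\theta=0$ versus $H_1:\theta>0$, the $p$-value is $p_{val}=1-\Phi(Z_{p-val})$, where $\Phi$ is the standard normal distribution function. Writing $z=Z_{p-val}>0$, the target inequality $z<\sqrt{-2\log(p_{val})}$ is equivalent (both sides positive, so we may square and rearrange) to
\[
1-\Phi(z) < e^{-z^2/2}.
\]
Thus the whole lemma reduces to this single tail estimate on the stated range.

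First I would establish the standard Gaussian tail bound. For $t\ge z>0$ one has $t/z\ge 1$, hence
\[
1-\Phi(z)=\frac{1}{\sqrt{2\pi}}\int_z^{\infty}e^{-t^2/2}\,dt \le \frac{1}{\sqrt{2\pi}}\int_z^{\infty}\frac{t}{z}\,e^{-t^2/2}\,dt = \frac{1}{z\sqrt{2\pi}}\,e^{-z^2/2}.
\]
Next I would bring in the threshold hypothesis. The condition $Z_{p-val}>1/\sqrt{2\pi}$ means precisely $z\sqrt{2\pi}>1$, so the prefactor $1/(z\sqrt{2\pi})$ is strictly less than $1$. Combining this with the display above yields $1-\Phi(z)<e^{-z^2/2}$, which is the reduced inequality.

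To finish, I would undo the reduction: taking logarithms preserves the order, giving $\log(1-\Phi(z))<-z^2/2$; multiplying by $-2$ reverses it to $-2\log(p_{val})>z^2$; and taking positive square roots (legitimate since $z>0$) recovers $z<\sqrt{-2\log(p_{val})}$. The equivalence with $p_{val}<1-\Phi(1/\sqrt{2\pi})=0.345$ is simply the condition $z>1/\sqrt{2\pi}$ pushed through the decreasing map $z\mapsto 1-\Phi(z)$. The only delicate point, and the reason the restriction appears, is the choice of threshold: the Mills-ratio prefactor $1/(z\sqrt{2\pi})$ exceeds $1$ once $z<1/\sqrt{2\pi}$, so the simple bound fails there, and the stated range is exactly what keeps the prefactor harmless. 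Beyond tracking the direction of the inequality through the logarithm and the multiplication by $-2$, I do not expect any genuine obstacle.
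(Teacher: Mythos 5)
Your proof is correct and follows essentially the same route as the paper's: both rest on the Mills-ratio bound $1-\Phi(z)\le \phi(z)/z$ together with the observation that the hypothesis $z>1/\sqrt{2\pi}$ makes the prefactor $1/(z\sqrt{2\pi})$ at most one (the paper phrases this equivalently as the term $2\log(\sqrt{2\pi}\,z)$ being non-negative after taking logarithms). Your version is slightly more self-contained, since you also derive the Mills-ratio inequality rather than citing it, but the argument is the same.
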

\begin{proof} The starting point is the well known Mill's Ratio inequality,
\[
p_{val}=1-\Phi(Z_{p-{val}})\leq \frac{\phi(Z_{p-{val}})}{Z_{p-{val}}}=\frac{exp(-Z^2_{p-{val}}/2)}{\sqrt{2\cdot \pi}\cdot z_{p-val}}.
\]
After some algebra, the following inequality appears,
\[
\sqrt{-2\cdot log(p_{val})} \geq \sqrt{[Z^2_{p-val}]+[2\cdot log((\sqrt{2\cdot \pi}\cdot Z_{p-val}))]}.
\]
Now the second term in the right hand side is non negative (since the quantil is increasing with the decrease of $p_{val}$, and therefore for $Z_{p-val} \geq 1/\sqrt{2\cdot \pi}$),
 $Z_{p-val} < \sqrt{(-2 \cdot log(p_{val}))}$ establishing the result.

Recall that the adaptive bounds in the odds scale is \\  $O_L(p_{val})=B_L(p_{val})\cdot (\frac{n^*}{log(n^*)})^{q/2}$. In what follows we will suppose that $q=1$ and the sampling model is Normal with known variance, but the results are conjectured to hold much more generally.
\begin{theorem}The ARLB is large sample consistent, for $q=1$ and Normal Data.\end{theorem}
\begin{enumerate}
\item Assume that the sampling model is the Null Model. Recall that under the Null the  $p_{val}$   is uniformly distributed. We wish to prove that, for any $W>0$,
\[
P(O_L(p_{val})>W|H_0)) \rightarrow 1,
\]
equivalently
\[
P(O_L(p_{val}) \le W|H_0) \rightarrow 0,
\]
or using the Lemma 1, above,
\[
P(B_L(p_{val}) (\frac{n^*}{log(n^*)})^{q/2} \le W|H_0))\le P(p_{val} \le (\frac{W}{e})\cdot (\frac{log(n^*)}{n^*})^{q/2})= (\frac{W}{e}) \cdot(\frac{log(n^*)}{n^*})^{q/2},
\]
(since the $p_{val}$ is Uniform under the Null) which goes to zero as the sample size $n$ grows.
\item Assume that the sampling model is the Alternative Model.
We wish to prove that for any $W>0$
\[
P(O_L(p_{val} )< W|H_1) \rightarrow 1.
\]
Equivalently,
\[
P(B_L(p_{val})<W \cdot \sqrt{\frac{log(n^*)}{n^*}}|H_1),
\]
converges to zero. But this expression,  using Lemma 1 is less than
\[
P(P_{val} \le \frac{W}{e}\cdot \sqrt{(log(n^*)/n^*)}|H_1).
\]
Let us call $\delta=\mu/\sigma>0$ the ratio of the true mean over the standard deviation. Then the distribution of the $P_{val}$ is \citep{Hung1997},
\[
G_{\delta}(P_{val}|H_1)=1-\Phi(Z_{P_{val}}-\sqrt{n^*}\cdot \delta)=
\]
\[
1-\Phi(Z_{\frac{W}{e}\cdot \sqrt{log(n^*)/n^*}}-\sqrt{n^*}\cdot \delta).
\]
Thus all that  is needed is to prove that,
\[
\Phi(Z_{\frac{W}{e}\cdot \sqrt{log(n^*)/n^*}}-\sqrt{n^*}\cdot \delta) \rightarrow 0.
\]
Now we make use of Lemma 2, which ensures that the expression above is less than,
\[
\Phi(\sqrt{-2\cdot log(\frac{W}{e}\cdot \sqrt{(\frac{log(n^*)}{n^*})})}-\sqrt{n^*}\cdot \delta),
\]
and this expression goes to zero, since
\[
constant+\sqrt{log(n^*)}-\sqrt{n^*} \cdot \delta \rightarrow -\infty.
\]
\end{enumerate}
\end{proof}

\begin{theorem}
The ARLB  is large likelihood consistent.
\end{theorem}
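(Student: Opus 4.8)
The plan is to reduce large likelihood consistency to the single nontrivial direction in which the data overwhelmingly favour the alternative, and then to show that the adaptive odds $O_L$ of $H_0$ against $H_1$ collapse to zero. Because the models are nested, the likelihood ratio of $H_0$ against $H_1$, namely $f_0(\mathbf{x}|\hat\theta_0)/f_1(\mathbf{x}|\hat\theta_1)$, never exceeds one, so the implication in the definition is vacuously satisfied in that direction and only the case $f_1(\mathbf{x}|\hat\theta_1)/f_0(\mathbf{x}|\hat\theta_0)\to\infty$ (with $n^*$ and $q$ held fixed) requires argument. The first step is therefore to record the monotone correspondence between this diverging likelihood ratio and a vanishing $p$-value: through the chi-square calibration $-2\log(B_{01})\approx \chi^2_\alpha(q)-q\log(n^*)+C^*$ used earlier, a diverging likelihood ratio statistic forces the observed $\chi^2_\alpha(q)\to\infty$ and hence $p_{val}\to 0$. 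In the canonical normal case this is exact, since $f_1/f_0=\exp(z^2/2)$ while $p_{val}=2(1-\Phi(|z|))\to 0$ as $|z|\to\infty$.

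The core of the argument is then a clean upper bound on $O_L$. Writing $\alpha=p_{val}$ and recalling
\[
O_L(\alpha,q,n^*)=[-e\,\alpha\log\alpha]\cdot\left[\frac{2n^*}{\chi^2_\alpha(q)+q\log(n^*)}\right]^{q/2}\cdot\frac{\Gamma(q/2)}{e},
\]
I would exploit that $\chi^2_\alpha(q)\ge 0$ to delete it from the denominator, giving, for fixed $n^*$ and $q$,
\[
O_L(\alpha,q,n^*)\le(-\alpha\log\alpha)\cdot\left[\frac{2n^*}{q\log(n^*)}\right]^{q/2}\Gamma(q/2).
\]
Here the bracketed factor is a constant, so the whole expression is controlled by $-\alpha\log\alpha$, which tends to $0$ as $\alpha\to 0^+$. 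Hence $O_L\to 0$, and since $P_L=O_L/(1+O_L)$, the ARLB probability of $H_0$ tends to $0$ while the implied ARLB probability $1/(1+O_L)$ of $H_1$ tends to $1$, which is exactly large likelihood consistency.

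One pleasant feature worth flagging is that the $\alpha$-dependence hidden inside $g_q(n^*)$ through $\chi^2_\alpha(q)$ works \emph{in our favour}: as $p_{val}\to 0$ the quantile $\chi^2_\alpha(q)$ diverges, which only shrinks $O_L$ further, so discarding it costs nothing and in fact understates the speed of convergence. This is precisely the behaviour that $g$-priors fail to reproduce, their Bayes factor staying bounded away from zero as the likelihood ratio diverges. The main obstacle, such as it is, lies not in the final estimate but in justifying the likelihood-ratio-to-$p$-value correspondence beyond the normal model; for $q=1$ and normal data it is exact, and for the general regular case I would lean on the monotonicity of the $p$-value in the likelihood ratio test statistic, noting that the displayed bound requires only $p_{val}\to 0$ and is otherwise insensitive to the sampling model.
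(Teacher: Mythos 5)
Your proposal is correct and follows essentially the same route as the paper: translate the diverging likelihood ratio into $p_{val}\to 0$ via the (approximate) $\chi^2$ null distribution of $-2\log LR$, observe that $-\alpha\log\alpha\to 0$ as $\alpha\to 0^{+}$ (the paper does this via L'H\^opital), and conclude that $O_L\to 0$ since the sample-size factor is fixed. Your explicit handling of the $\alpha$-dependence hidden in $\chi^2_{\alpha}(q)$ inside $g_q$ is a small point of extra care the paper sidesteps by using the simplified form $B_L(p)\cdot(n^*/\log n^*)^{q/2}$, but the argument is the same.
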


\begin{proof}
For sample size n fixed, assume that $Lik.Ratio_{01}(x_{observed})\rightarrow 0$, that is
$-2 \cdot Lik.Ratio_{01}(x_{observed})\rightarrow \infty$. Now,
\[
p_{val}=Pr(-2log(LR_{0,1}(X))>-2 log(LR _{0,1}(x_{observed}))|H_0) \rightarrow 0,
\]
since $-2log(LR_{0,1}(X))$ is approximately distributed as a central $\chi_r^2$, under $H_0$. But, applying L'Hopital's rule, as $p \rightarrow 0$,
\[
lim _{p \rightarrow 0}B_L(p)=lim _{p \rightarrow 0}\frac{e \cdot log(p_{val})}{-1/p}  =lim _{p \rightarrow 0} e \cdot p=0,
\]
therefore (for fixed $n^*$)  $O_L(p)=B_L(p)*(\frac{n^*}{log(n^*)})^r \rightarrow 0.$
\end{proof}

\begin{remark}
The p-value is not well defined, when the Null Hypothesis
is not simple, that is when it is a composite Null Hypothesis. For discussion of a general definition of a valid $p_{val}$, see Bayarri and Berger (1998) and Robins et al (2000)In fact The most frequent definition of a general p-value is:
\[
p^*_{val}=Sup_{\theta \in \Theta_0}P(T(X)\geq T(x)|\theta ).
\]
The question is if $p^*_{val}$ is a valid $p_{val}$? The stringent definition is that a valid $p_{val}$ is an statistic uniformly distributed under $H_0$. The relaxed definition is that the distribution $p_{val}$ is smaller or equal to $p$. In other words the stringent definition  requires that the distribution is equal to $p_{val}$ and the relaxed definition is that is equal or smaller to $p_{val}$. The $p^*_{val}$ obeys the definition of a relaxed $p_{val}$,
for a proof see Casella and Berger (2002) page 397\\
The question is if the Robust Lower Bound is still valid for $p*_{val}$?
\end{remark}
Under the stringent definition of a $p_{val}$ the Robust Lower Bound is valid lower bound for "decreasing failure rates alternatives". The conjecture is that it is still valid (at least for not very small sample sizes), under the relaxed definition, in which case the results established here would be still valid, for the relaxed definition.

\section{Final Comments}

P-values may have been misunderstood, but they are available for virtually any statistical model. So, calibration of (conditional on the data) p-values, may ironically be the fastest way to popularize the use of Posterior Model Probabilities.

The results established in this paper make the Adaptive Robust Lower Bound an easy to implement and safe measure of evidence for or against a null hypothesis that automatically calibrates p-values as posterior probabilities, taking into account the effect of the sample size.

\section*{Acnowledgements}
This work has been partially funded by NIH grants  U54CA096297, P20 GM103475-14A1 and R25MD010399. The authors want to thank Prof. James O. Berger for many illuminating discussions on the subject. The contributions of Susie Bayarri have been an inspiration during all this proccess, and authors dedicate this work to her memory.

\bibliographystyle{ba}
\bibliography{ARLB}

\begin{appendices}
\section{The function $g_q(n^*)$}
We wish to find $g_q(n^*)$ so that for $\alpha_{n^*}(q)$ in equation (1),
\[
O_L(\alpha,q)=-e \cdot \alpha_{n^*}(q) \cdot \log(\alpha_{n^*}(q)) \cdot g_q(n^*) \rightarrow O(1),
\]
as $n^*$ grows. 

If we let

\[ g_q(n^*)= \left[\frac{n^*}{\log(\chi^2_{\alpha}(q)+q\cdot n^*)}\right]^{q/2}\] 
then Calculus show that the limit of $O_L(\alpha,q)$ as $n^*$ grows is equal to $\frac{e}{2}$ for any value of $q$.

\section{Predictive distribution when the likelihood is a Gamma$(\alpha,\beta$ and the prior is ScaledBeta2($p,q,b$)}
Let $X_1, \ldots, X_n$ i.i.d. Gamma($\alpha,\beta$) with $\alpha$ known. The likelihood of this sample is:

\[
f(x|\beta)= \frac{\beta^{n\alpha}}{\Gamma^n(\alpha)}\left( \prod_{i=1}^n x_i\right)^{\alpha-1} \exp\{-\beta n \bar{x}\}
\]

The prior for $\beta$ will be a ScaledBeta2($p,q,b$).

\[
\pi(\beta) = \frac{\Gamma(p+q)}{\Gamma(p)\Gamma(q)} b^q\frac{\beta^{p-1}}{(\beta+b)^{p+q}}
\]

The predictive distribution for ${\bf x}$ is, then

\[
m(\mathbf{x}) =\frac{\Gamma(p+q)}{\Gamma(p)\Gamma(q)}  \frac{\left( \prod_{i=1}^n x_i\right)^{\alpha-1}}{\Gamma^n(\alpha)} b^q I(\mathbf{x}),
\]

\noindent
where

\[
I(\mathbf{x}) = \int_0^{\infty} \frac{\beta^{(n\alpha+p)-1}}{(\beta+b)^{p+q}} \exp\{ -\beta n \bar{x}\}
\]

Making the variable change $v=n \bar{x} \beta$

\[
I(\mathbf{x}) = \frac{1}{(n\bar{x})^{n\alpha-q}} \int_0^{\infty} \frac{v^{n\alpha+p-1}}{(v+n\bar{x}b)^{p+q}} \mathrm{e}^{-v} dv
\]

Some important particular cases:

\begin{enumerate}
\item $ \mathbf{p=q=1}$
\begin{eqnarray*}
I(\boldmath{x}) & =  & \frac{1}{(n\bar{x})^{n\alpha-1}} \int_0^{\infty} \frac{v^{n\alpha}}{(v+n\bar{x}b)^{2}} \mathrm{e}^{-v} dv\\
& = & \frac{\Gamma(n\alpha)}{ (n\bar{x})^{n\alpha-1}} \left\{n(\alpha+\bar{x}b)\mathrm{e}^{n\bar{x}b}E_{n\alpha}(n\bar{x}b)-1\right\}
\end{eqnarray*}

\noindent where $E_n(x)=\int_1^{\infty} \frac{\exp(-xt)}{t^n}dt$ is the {\em Exponential Integral Function} (result calculated with Wolfram Alpha).

Therefore,
\[
m(\mathbf{x}) = \frac{\left( \prod_{i=1}^n x_i\right)^{\alpha-1}}{\Gamma^n(\alpha)}b \frac{\Gamma(n\alpha)}{ (n\bar{x})^{n\alpha-1}} \left\{n(\alpha+\bar{x}b)\mathrm{e}^{n\bar{x}b}E_{n\alpha}(n\bar{x}b)-1\right\}
\]
When $\alpha =1$ (exponential distribution)
\[
m(\mathbf{x}) =  \frac{b \Gamma(n)}{ (n\bar{x})^{n-1}} \left\{n(\bar{x}b+1)\mathrm{e}^{n\bar{x}b}E_{n}(n\bar{x}b)-1\right\}
\]

\item $\mathbf{p=q=\frac{1}{2}}$
\[ 
I(\mathbf{x})=\mathrm{e}^{n\bar{x}b} \Gamma(n\alpha+\frac{1}{2})E_{n\alpha+\frac{1}{2}}(n\bar{x}b)
\]

\noindent and

\[
m(\mathbf{x})=\frac{\sqrt{b}}{\pi} \frac{\Gamma(n\alpha+\frac{1}{2})}{\Gamma^n(\alpha)}\left(\prod_{i=1}^n x_i\right)^{\alpha-1} \mathrm{e}^{n\bar{x}b}E_{n\alpha+\frac{1}{2}}(n\bar{x}b)
\]

\end{enumerate}
\end{appendices}
\end{document}